\numberwithin{equation}{section}
\theoremstyle{plain}
\newtheorem{thm}{Theorem}[section]
\theoremstyle{definition} 
\newtheorem{exam}[thm]{Example}
\newtheorem{dfn}[thm]{Definition}
\newtheorem{prop}[thm]{Proposition}
\begin{document}

\title{An Evolutionary Theory for the Variability Hypothesis}

\author{Theodore P. Hill}

\date{\vspace{-5ex}}  

\maketitle

\begin{abstract}
An elementary biostatistical theory based on a ``selectivity-variability" principle is proposed to address a question raised by Charles Darwin, namely,  how one sex of a sexually dimorphic species might tend to evolve with greater variability than the other sex. Briefly, the theory says that if one sex is relatively selective then from one generation to the next, more variable subpopulations of the opposite sex will generally tend to prevail over those with lesser variability.  Moreover, the perhaps less intuitive converse also holds -- if a sex is relatively non-selective, then less variable subpopulations of the opposite sex will  prevail over those with greater variability. This theory requires certain regularity conditions on the distributions, but makes no assumptions about differences in means between the sexes, nor does it presume that one sex is selective and the other non-selective. Two mathematical models of the selectivity-variability principle are presented: a discrete-time one-step probabilistic model of short-term behavior with an example using normally distributed perceived fitness values; and a continuous-time deterministic model for the long-term asymptotic behavior of the expected sizes of the subpopulations with an example using exponentially distributed fitness levels.
\end{abstract}

\section{Introduction}
In his research on evolution in the 19th century Charles Darwin observed differences in variability between the sexes, reporting that 
\begin{quote}
throughout the animal kingdom, when the sexes differ in external appearance, it is, with rare exceptions, the male which has been the more modified; for, generally, the female retains a closer resemblance to the young of her own species, and to other adult members of the same group \cite[pp.~221]{vob79}. 
\end{quote}
Since then evidence of greater male variability, although by no means universal in either traits or species, has been reported in a wide variety of animal species from wasps and adders to salmon and orangutans (cf. \cite{vob105}). Specifically citing Darwin's research on animals  \cite[pp.~221--27]{vob79} and Ellis's research on humans \cite[pp.~358--372]{vob11} psychologist Stephanie Shields~wrote 
\begin{quote}
By the 1890's several studies had been conducted to demonstrate that variability was indeed more characteristic of males \ldots The biological evidence overwhelmingly favored males as the more variable sex \cite[pp.~772-73]{vob67}. 
\end{quote}

The past quarter century has produced much new research on the greater male variability hypothesis in different contexts, most of which refer to humans, and the majority of which support Darwin's observation (e.g., see Appendix A). After citing specific evidence of greater male variability, Darwin had also raised the question of {\em why} this might occur, writing 
\begin{quote}
The cause of the greater general variability in the male sex, than in the female is unknown \cite[p.~224]{vob79}. 
\end{quote}
This question has persisted into the 21st century as noted for example by Hyde et al.: ``There is evidence of slightly greater male variability in scores, although the causes remain unexplained'' \cite[p.~495]{vob28}, and Halpern et al.: ``the reasons why males are often more variable remain elusive'' \cite[p.~1]{vob25}.
As  statistician Howard Wainer phrased it, 
\begin{quote}
Why was our genetic structure built to yield greater variation among males than females? And not just among humans, but virtually all mammals \cite[p.~255]{vob60}.
\end{quote}

The objective of this paper is to propose an elementary mathematical principle to help explain how  one sex in a given species could naturally evolve toward greater or lesser variability depending on the preferences of the opposite sex. Together with two additional standard biological tenets, this principle might help provide an answer to Darwin's question.
 

 \section{A Theory for Differences in Variability Between Sexes}
In very general terms, the first principle of the theory introduced here is this:
  \vspace{1em}
  
\noindent
\textbf{\textsc{Selectivity-Variability Principle.}} 
{\em 
In a species with two sexes A and B, both of which are needed for reproduction, suppose that sex A is relatively {\bf selective}, i.e., will mate only with a top tier (less than half) of B candidates. Then from one generation to the next, among subpopulations of B with comparable average attributes, those with {\bf greater variability} will tend to prevail over those with lesser variability. Conversely, if A is relatively {\bf non-selective}, accepting all but a bottom fraction (more than half) of the opposite sex, then subpopulations of B with {\bf lesser variability} will tend to prevail over those with comparable means and greater variability.
}
\\

Note that this principle does not make any assumptions about inherent differences in means or other attributes between the sexes.  For instance, it does not presume that one sex is selective and the other non-selective, or even that one sex is more selective than the other, unlike Bateman's principle \cite{novob11}, for example, or other related theories such as ``the sex that experiences more intense...vetting by the other sex will tend to show greater within-sex variation on many traits" \cite[p.~176]{vob3}.  If both sexes of a species happen to be selective, the selectivity-variability principle here predicts that the best evolutionary strategy for each is to tend toward greater variability.

It is also important to note that this principle alone says nothing about {\em a priori} or {\em a posteriori} comparisons of the variabilities across the sexes. For example, if sex A is not selective while sex B is selective, this principle says that subpopulations of A with greater variability will prevail over subpopulations of A with lesser variability, and that subpopulations of B with lesser variability will prevail over subpopulations of sex B with greater variability. It says nothing about comparing the resulting variability of sex A with the variability of sex B. If all the subpopulations of B were initially more variable than all the subpopulations of A, for instance, then the next generation of sex B will still exhibit greater variability than the next generation of sex A whether either sex is selective or non-selective. Only under additional hypotheses, such as ``both sexes began with comparable mid-range variability", as will be done in an illustrative example below to address Darwin's question, can this theory be useful to draw any conclusions about comparisons of variability between the two sexes.

 \vspace{1em}
 
 In order to make this selectivity-variability theory more precise, of course, it is necessary to define formally what is meant by selectivity and variability in this context, and that will be done in the next section. 
 First, the following simple informal hypothetical example may help convey the intuition behind this principle.
 
 \vspace{1em}
 
  \begin{figure}[!ht] 
 
  \center\includegraphics[width=0.75\textwidth, height=0.75\textheight]{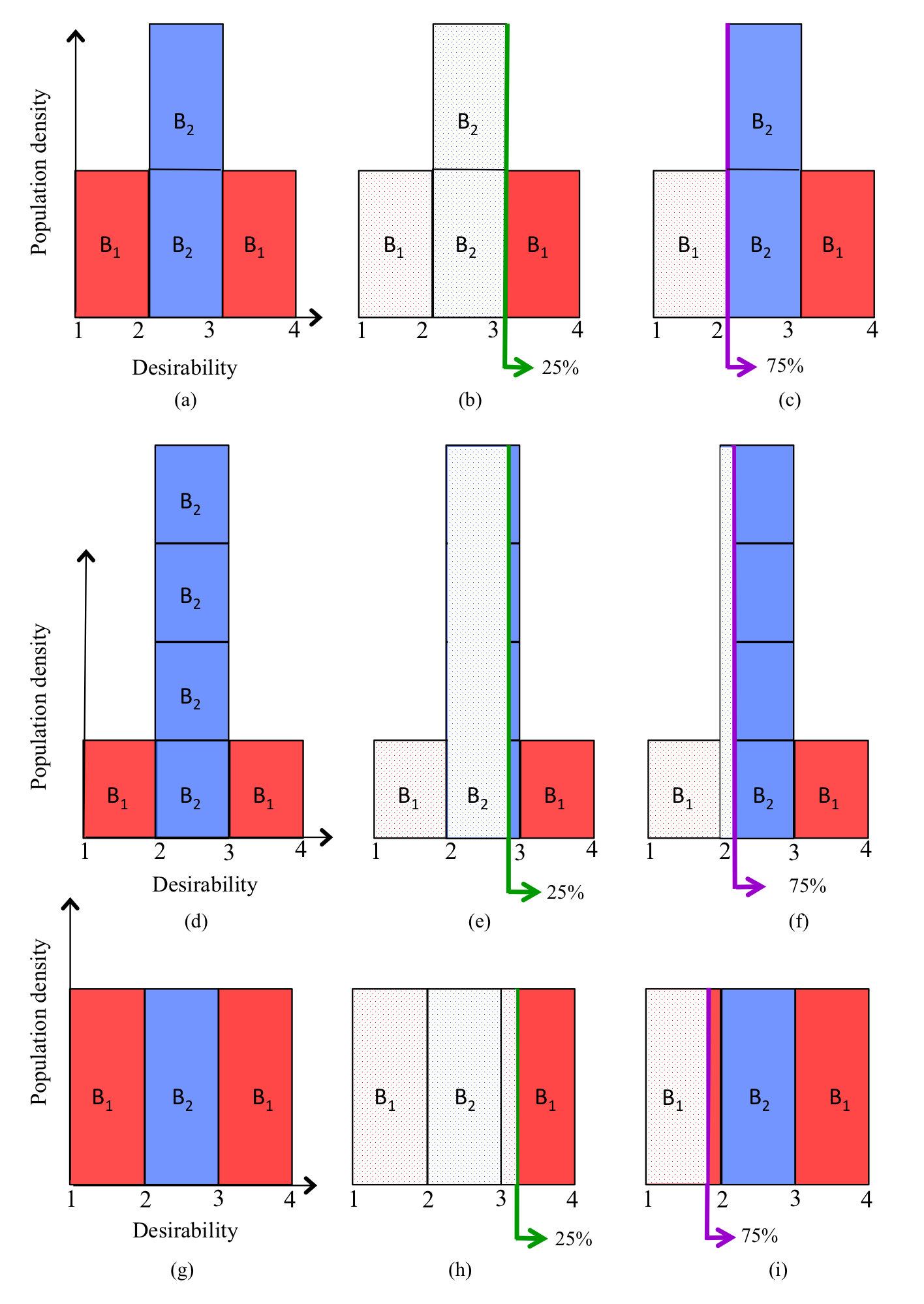}
  \caption{The three cases of Example~\ref{exam1}:  The red blocks represent the distribution of the more variable subpopulation $B_1$ of sex $B$, and the blue represent the less variable subpopulation $B_2$.}
  \label{newFig1}
\end{figure} 

  \begin{exam} \label{exam1} 
  Sex $B$ consists of two subpopulations $B_1$ and $B_2$. Sex $A$ considers half of the individuals in $B_1$ very desirable and the other half not very desirable, and it considers all of the individuals in subpopulation $B_2$ of mid-range desirability. Thus  $B_1$ is more variable in desirability to sex $A$ than $B_2$, with $B_1$ and $B_2$ having comparable average desirability. (See Figure~\ref{newFig1}, where larger numbers reflect greater desirability.)
  
{\em Special Case 1.} $B_1$ and $B_2$ are of equal size. Then sex $A$ considers one quarter of sex $B$ (the lower half of $B_1$) of relatively low desirability, half of $B$ (all of $B_2$) of medium desirability, and one quarter of $B$ (the upper half of $B_1$) of above-average desirablity (see Figure~\ref{newFig1}a). If sex  $A$ is relatively selective and will mate only with the top most desirable quarter of sex
$B$, then all of the next generation will be offspring of the more variable subpopulation $B_1$ 
(Figure~\ref{newFig1}b).
On the other hand, if sex $A$ is relatively non-selective and will mate with any but the lower quarter of $B$, then all of the less variable $B_2$ will mate, but only half of the more variable $B_1$ will mate (Figure~\ref{newFig1}c).

Similar conclusions follow if the initial subpopulations are not of equal size. 

{\em Special Case 2.} One third of sex $B$ is the more variable $B_1$ and two thirds is the less variable $B_2$ (Figure~\ref{newFig1}d). If sex  $A$ only mates with the most desirable quarter of $B$, a short calculation shows that two thirds of the next generation will be offspring of $B_1$ and one third will be offspring of $B_2$, so based on the initial distribution, the more variable subpopulation will be overrepresented (Figure~\ref{newFig1}e). If sex $A$ will mate with any but the least desirable quarter of $B$, then only two ninths of the next generation will be offspring of $B_1$ and seven ninths will be offspring of  $B_2$, so the less variable subpopulation of sex  $B$ will be overrepresented (Figure~\ref{newFig1}f).

{\em Special Case 3.} Two thirds of sex $B$ is the more variable $B_1$ and only one third is the less variable $B_2$ (Figure~\ref{newFig1}g). If sex $A$ only mates with the most desirable quarter of $B$, then all of the next generation will be offspring of $B_1$ (Figure~\ref{newFig1}h), and if sex $A$ will mate with any but the least desirable quarter of $B$, then only five ninths of the next generation will be offspring of $B_1$ and the rest will be offspring of $B_2$, so again the less variable subpopulation of sex $B$ will be overrepresented (Figure~\ref{newFig1}i).
\end{exam}

\noindent
 Note the asymmetry in the mating probabilities in this example; some intuition behind why this occurs may perhaps be gained from the observation that the most desirable individuals in the more variable population will always be able to mate, whether the opposite sex is selective or non-selective. 

\section{Desirability and Selectivity}

In order to begin to try to interpret these ideas analytically, it is of course necessary to identify concrete definitions of ``desirable", ``selective" and ``more variable". There are clearly many different candidates to capture the essence of each of these terms; the following assumptions and definitions are simply intended as a starting point to facilitate proposal of several models and analysis of the above selectivity-variability principle. 

To begin with, the informal notion of desirability introduced in Example \ref{exam1} above will be extended as follows.

 \vspace{1em}
 
 \textbf{\textsc{Desirability Assumption.}} 
{\em 
Each individual (or phenotype) in each sex is assigned a real number which reflects its desirability to the opposite sex, with higher values indicating greater desirability.}
\\

 As a concrete example, one interpretation of the desirability value of an individual might be the opposite sex's perception or estimation of its Darwinian fitness (e.g., \cite{novob1}). The actual magnitudes of these desirability values are not assumed to have intrinsic significance in general, but are used only to make comparisons between individuals. Here and throughout, it will also be assumed that the same desirability value is assigned to each individual by every member of the opposite sex. In real life scenarios, of course, the desirability of an individual varies from one member of the opposite sex to another, and is not quantifiable in a single one-parameter value. 
 
Next, the informal notion of selectivity introduced in Example \ref{exam1} above will be formalized.
\vspace{1em}

\textbf{\textsc{Selectivity Assumption.}} 
{\em 
For each sex in a given sexually dimorphic species there is an upper proportion $p \in (0,1)$ of the opposite sex that is acceptable for mating. }\\

For example, if $p_A$ is that proportion for sex $A$, then members of sex $A$ will mate with individual $b$ of the opposite sex $B$ if and only if $b$ is in the most desirable $p_A$ fraction of individuals in $B$. If $p_A<0.5$, then sex  $A$ is said to be (relatively) {\em selective}, and if $p_A>0.5$, then  $A$ is said to be {\em non-selective}. For instance, if $p_A=0.25$, then
sex $A$ is selective, since it will mate only with the most desirable quarter of sex $B$, and if $p_A=0.75$, then sex $A$ is non-selective, 
since it will mate with any but the least desirable quarter of sex $B$.

\vspace{1em}

\noindent
{\em N.B.} Of course these assumptions about desirability values and selectivity are clearly not satisfied in most real life scenarios, and are simply intended here as a starting point for discussion of the general ideas.  For example, the acceptability fractions $p_A$ may reflect not only desirability, but also availability or proximity. In this simple model it is therefore assumed that the populations are large and mobile so there are always available potential mates of the opposite sex above the threshold desirability cutoff. Similarly, for simplicity it will be assumed throughout that the offspring of any coupling consist of equal numbers of each sex.

\section{Variability}

The desirability of individuals in one sex by the opposite sex varies from individual to individual, and its normalized distribution is a probability distribution. Thus to address the notion of differences between two subpopulations of the same sex in the variability of their desirability to the opposite sex, the notion of one probability distribution being more (or less) variable than another must be specified.  As will be seen in the next example, for instance, if by ``more variable" is meant ``larger standard deviation" (or statistical variance), then the above selectivity-variability principle is not true without additional assumptions on the underlying distributions.

\begin{exam} \label{examB} 
Sex $B$ consists of two subpopulations $B_1$ and $B_2$, with six individuals each: $B_1$ has one individual of desirability value $1$ (to sex $A$), one of desirability $5$, and four individuals of desirability $3$; $B_2$ has three individuals of desirability value $2$ and three of desirability $4$. Thus both $B_1$ and $B_2$ have mean desirability  $3$, the variance of $B_1$ is $4/3$ and the variance of $B_2$ is $1$.  

If sex $A$ is {\em selective} with $p_A = 0.25$, then two of the three individuals that sex $A$ selects from sex $B$ will be from $B_2$, the subpopulation with {\em smaller variance}. Conversely, if sex $A$ is {\em non-selective} with $p_A = 0.75$, then five of the nine individuals that sex $A$ selects from sex $B$ will be from $B_1$, the subpopulation with {\em larger variance}.  Thus for these distributions and a standard deviation definition of variability, both directions of the above selectivity-variability principle fail.
\end{exam}

There are many other possibilities for definitions of variability, such as comparisons of ranges or Gini mean differences, but those can be very misleading in this setting since a single outlier can dramatically alter the values of such statistics. On the other hand, basic comparisons of the tails of distributions leads to a natural notion of greater or lesser variability.

To that end, for a real Borel  probability measure $P$ let $S_P$ denote the complementary cumulative distribution function of $P$,  i.e.,  $S_P:\mathbb{R}\to [0,1]$  is defined by $S_P(x)=P(x,\infty)$ for all $x\in \mathbb{R}$.  That is, $S_P(x)$ is simply the proportion of a population with distribution $P$ that is above the threshold $x$; see Figure~\ref{newFig2} for three examples. For brevity, the term {\em survival function} will be used here; in this context  $S_P(x)$ may be thought of as the proportion of a given sex with desirability (by the opposite sex) distribution function $P$ that ``survives" the cut when the opposite sex only accepts individuals  with desirability value $x$ or larger.

\begin{dfn} \label{def1}
$P_1 \succ P_2$ if, for all $x$ with $0 < S_{P_1}(x) < 1$,
$$
S_{P_1}(x)> S_{P_2}(x) \mbox{ for all } x>m \mbox{ and }S_{P_1}(x)<S_{P_2}(x)\mbox{ for all } x<m.
$$  
\end{dfn}

In other words, $P_1$  {\em is more variable than} $P_2$ if the proportions of $P_1$ both above every upper (larger than median) threshold and the proportions below every lower threshold level are greater than those for $P_2$. That is, both upper and lower tails of the $P_1$ distribution are heavier than those of the $P_2$ distribution, for all thresholds.

\begin{figure}[!ht] 
  \center\includegraphics[width=1.0\textwidth]{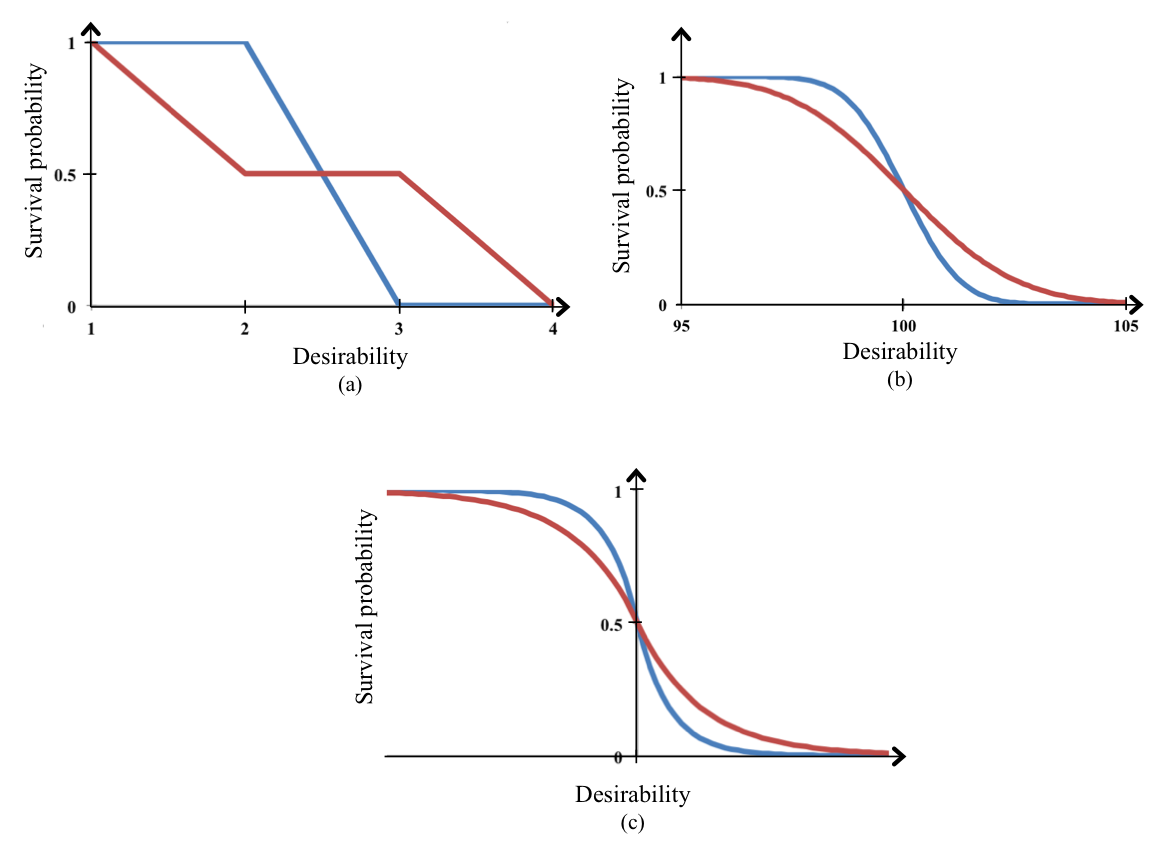}
  \caption{The survival functions and comparative variability of three pairs of distributions:  (a) the uniform distributions in Example~\ref{exam1} above;  (b) the normal distributions in Example~\ref{exam2} below; and (c) the Laplace distributions in Example~\ref{exam3}.  In each case, the red curves denote the more variable distribution.}
 \label{newFig2}
\end{figure}

In Example~\ref{exam1}, where the selectivity-variability principle was illustrated informally, the distribution of subpopulation $B_1$ is more variable than the distribution of subpopulation $B_2$ both in the sense of standard deviation and in the sense of Definition~\ref{def1} (see Figure~\ref{newFig2}a), and it is this definition that will be seen below to lead to settings where the principle is valid. As was seen in Example~\ref{examB}, the selectivity-variability principle may fail for arbitrary distributions if variability is defined in terms of standard deviation, but the next proposition identifies several common and important classes of distributions where greater standard deviation or scale factor (when standard deviation is infinite) coincide with the notion of greater variability in Definition~\ref{def1}, and thus these distributions are applicable to the models below. The conclusions are perhaps well-known, but as no reference is known to the author, a short proof is included. 

\begin{prop} \label{newPropA} 
Let $P_1$ and $P_2$ be (real Borel) probability measures with identical medians.
\begin{enumerate}[label=(\roman*)]
\item If both $P_1$ and $P_2$ are uniform, symmetric triangular, Laplace, or Gaussian, then
\begin{equation*}
P_1 \succ P_2 \quad \mbox{if and only if} \quad variance(P_1) > variance(P_2).
\end{equation*}

\item If  both $P_1$ and $P_2$ are Cauchy, then 
\begin{equation*}
P_1 \succ P_2 \quad \mbox{if and only if} \quad \mbox{ {\em scale factor} of } P_1 > \mbox{ {\em scale factor} of } P_2.
\end{equation*}
\end{enumerate}
\end{prop}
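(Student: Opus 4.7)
The plan is to reduce both parts uniformly to a statement about location--scale families. Each family listed consists of distributions $P_\sigma$ indexed by a scale $\sigma>0$ (and a location, which we fix to the common median $m$); their survival functions all have the form $S_\sigma(x) = \bar F\bigl((x-m)/\sigma\bigr)$, where $\bar F$ is the survival function of a standardised member of the family (uniform on $[-1,1]$, symmetric triangular on $[-1,1]$, standard Laplace, standard normal, standard Cauchy). In every case $\bar F$ is continuous, symmetric about $0$, and strictly decreasing on the interior of its support. These are the only properties I intend to use, so the five families can be handled in a single argument, with only the final ``scale $\Leftrightarrow$ variance'' step done case by case.

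Assume without loss of generality that $m=0$. By symmetry $S_\sigma(-x)=1-S_\sigma(x)$, so the lower-tail requirement in Definition~\ref{def1} is automatic once the upper-tail requirement is established, and vice versa; this collapses the bidirectional condition in the definition to the single statement $S_{\sigma_1}(x)>S_{\sigma_2}(x)$ for all $x>0$ with $S_{\sigma_1}(x)\in(0,1)$. For $x>0$, strict monotonicity of $\bar F$ on its support gives $\bar F(x/\sigma_1)>\bar F(x/\sigma_2)\iff x/\sigma_1 < x/\sigma_2 \iff \sigma_1>\sigma_2$. Thus for any of the five families, $P_1\succ P_2$ is equivalent to $\sigma_1>\sigma_2$. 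A minor bookkeeping point is the bounded-support case (uniform, triangular): if $x$ exceeds the support of the narrower distribution, $S_{\sigma_2}(x)=0$ while the restriction $S_{\sigma_1}(x)\in(0,1)$ keeps $S_{\sigma_1}(x)>0$, so the inequality still holds; this is exactly why Definition~\ref{def1} is quantified over $\{x:0<S_{P_1}(x)<1\}$ rather than all of $\mathbb{R}$.

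To finish part~(i), I note that variance is a strictly increasing function of the scale in each family: $\sigma^2/3$ for the uniform on $[-\sigma,\sigma]$, $\sigma^2/6$ for the symmetric triangular on $[-\sigma,\sigma]$, $2b^2$ for Laplace with scale $b$, and $\sigma^2$ itself for the Gaussian. Hence ``larger variance'' is the same as ``larger scale'', which by the preceding paragraph is the same as $P_1\succ P_2$. For part~(ii), the Cauchy family has no variance but the scale parameter is intrinsic, so no translation is needed and the conclusion is immediate from the location--scale argument.

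The only conceptually subtle step is the reduction via symmetry: the lower-tail condition in Definition~\ref{def1} could in principle be an independent constraint, and symmetry about the median is precisely what makes it redundant. This is why the proposition is stated for families that happen to be symmetric, and it is why the counterexample of Example~\ref{examB}, whose subpopulations are not drawn from a single symmetric location--scale family, does not contradict the result. I expect no further obstacles; the remaining work is the routine substitution of the standard survival functions and the elementary variance formulas cited above.
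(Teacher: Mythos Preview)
Your argument is correct, and it is more unified than the paper's own proof. The paper treats the four families with closed-form cumulative distribution functions (uniform, triangular, Laplace, Cauchy) by appealing to ``routine calculations'' on the explicit piecewise-linear, quadratic, exponential, and arctangent formulas, and then handles the Gaussian case separately on the grounds that its distribution function is not known in closed form. For the Gaussian, the paper writes out the scaling identity $\sigma_2(X_1-\mu)\overset{d}{=}\sigma_1(X_2-\mu)\sim N(0,\sigma_1^2\sigma_2^2)$ and compares $P(X_1>c)$ with $P(X_2>c)$ directly.

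Your approach recognises that all five families share the same location--scale structure with a symmetric, strictly decreasing standardised survival function, and that this structure alone is enough: no closed forms are needed anywhere, so the Gaussian is not a special case. In effect, the paper's Gaussian argument is an unpacked instance of your general location--scale reduction, and your treatment makes transparent why the proposition holds for \emph{any} symmetric location--scale family, not just the five named ones. Your handling of the bounded-support edge case and of the symmetry reduction for the lower tail is also explicit where the paper is silent. The only trade-off is that the paper's case-by-case style lets a reader verify each family by direct substitution, whereas your argument asks the reader to accept the abstract location--scale framework; in return you gain brevity and generality.
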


\begin{proof}
Since the cumulative distribution functions for uniform, symmetric triangular, Laplace, and Cauchy distributions are known in closed form, the conclusions regarding those distributions follow from Definition \ref{def1} and routine calculations by comparisons of the respective piecewise linear, quadratic, exponential, and arcsin distribution functions.

To see the conclusion for Gaussian distributions, for which the distribution functions are not known in closed form, suppose
$X_1\sim N(\mu,\sigma^2_1)$ and $X_2\sim N(\mu,\sigma^2_2)$, where $N(\mu,\sigma^2)$ denotes a normal distribution with mean $\mu$ and standard deviation $\sigma$. Without loss of generality suppose that $\sigma^2_1>\sigma^2_2$.   Then for all $c>\mu$,
\begin{equation*}\label{neweq1} 
\begin{split} 
P(X_1>c)
&= P(\sigma_2(X_1-\mu)>\sigma_2(c-\mu))\\
&= P(\sigma_1(X_2-\mu)>\sigma_2(c-\mu))\\
&>P(\sigma_1(X_2-\mu)>
\sigma_1(c-\mu))\\
&= P(X_2>c),
\end{split}
\end{equation*}
where the second equality follows since, by the rescaling and translation properties of normal distributions, 
\begin{equation*}\label{neweq3}  
\sigma_2(X_1-\mu) \mbox{ and }\sigma_1(X_2-\mu) \mbox{ are both }
N(0,\sigma^2_1\sigma^2_2).
\end{equation*}
The case $c < \mu$ follows similarly, and since the mean of every normal distribution is the same as the median, this completes the proof.
\end{proof}

It should also be noted that distributions sufficiently close to the distributions in Proposition~\ref{newPropA}  will also obey the same variability conclusions. (E.g., no real-life data is ever exactly Gaussian, but in many applications Gaussian distributions are good approximations and very useful in practice.) Note that the above definition of greater variability does not require finite standard deviations or symmetry of the distributions, although the examples provided below have both. Some assumption on two distributions (of the same sex) having comparable average attributes is clearly necessary to be able to draw any useful conclusions in this selectivity context; the assumption of identical medians used here is one natural candidate. Similar conclusions may be drawn about weak-inequality versions of this definition and about one-sided variability, and these are left to the interested reader. For example, if both the median and upper tails of one distribution are larger than those of another, then that first distribution will also prevail if the opposite sex is selective.

\vspace{1em}

Using the above definitions of variability and selectivity, the main objective of this paper is to present two mathematical models for the selectivity-variability principle above. 

\section{A Discrete-Time Probabilistic Model}
Suppose  that sex  $B$ of a given hypothetical species consists of two distinct subpopulations $B_1$ and $B_2$, of which a proportion $\beta \in (0,1)$ is of type $B_1$ (and $1-\beta$ is of type $B_2$). Let $P_1$ and $P_2$ denote the desirability distributions of $B_1$ and $B_2$, respectively, and assume that $P_1$ is more variable than $P_2$, i.e., $P_1 \succ P_2$.  It will now be shown that, for all $\beta$,  if sex $A$ is selective, then subpopulation $B_1$ will be overrepresented in the subsequent generation, and if sex $A$ is non-selective, then subpopulation $B_2$ will be overrepresented in the subsequent generation.  These are direct analogs and extensions of the informal observations in Example~\ref{exam1} above. 

To that end, note that if $\beta$ is the proportion of sex $B$ that is from subpopulation $B_1$, then letting $S_1$ and $S_2$ denote the survival functions of $B_1$ and $B_2$, respectively, the number
$$
\frac{\beta S_1(c)} { \beta S_1(c)+(1-\beta)S_2(c)}
$$	
represents the proportion of sex $B$ that is from subpopulation $B_1$ when $A$ accepts only individuals in $B$ with desirability value above cutoff level $c$. This motivates the following definition.
	
\begin{dfn} \label{def2}
If sex $B$ consists of two subpopulations $B_1$ and $B_2$  and if $\beta \in (0,1)$ is the proportion of sex $B$ that is $B_1$, then {\em subpopulation $B_1$ will be overrepresented in the subsequent generation} if and only if
$$
\frac{\beta S_1(c^*)} { \beta S_1(c^*)+(1-\beta)S_2(c^*)} > \beta,
$$
where $S_1$ and $S_2$ are the survival functions of the desirability distributions of $B_1$ and $B_2$, respectively, and $c^*$ is the desirability cutoff of sex $A$ for mating with individuals in sex $B$, i.e., 
$$
 \beta S_1(c^*)+(1-\beta)S_2(c^*) = p_A.
 $$
\end{dfn}

Note that this definition does not assume that the offspring of $B_1$ will have desirability distributions identical to that of $B_1$ but simply that a larger proportion of the subsequent generation will be offspring of $B_1$ than the proportion of $B_1$ in the original population. With this notion of overrepresentation, an elementary formalized version of the selectivity-variability principle is as follows.  

\begin{thm}\label{propA2} 
Let  sex  $B$ consist of two distinct subpopulations $B_1$ and $B_2$ with desirability distributions $P_1$ and $P_2$, respectively, with identical medians $m$ and with desirability survival functions $S_1$ and $S_2$ which are continuous and strictly decreasing. 
Suppose subpopulation $B_1$ is more variable than $B_2$, i.e., $P_1 \succ P_2$.  Then

\begin{enumerate}[label=(\roman*)]
\item If $p_A<0.5$ , i.e., if sex $A$ is selective, then the more variable subpopulation $B_1$ will be overrepresented in the subsequent generation.

\item If $p_A>0.5$, i.e., if sex $A$ is non-selective, then the less variable subpopulation $B_2$ will be overrepresented in the subsequent generation.

\end{enumerate}
\end{thm}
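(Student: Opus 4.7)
The plan is to reduce the overrepresentation inequality in Definition~\ref{def2} to a direct comparison of $S_1(c^*)$ and $S_2(c^*)$, then locate $c^*$ relative to the common median $m$, and finally invoke Definition~\ref{def1}.

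First I would observe that the condition for overrepresentation of $B_1$ is purely algebraic: starting from
\begin{equation*}
\frac{\beta S_1(c^*)}{\beta S_1(c^*)+(1-\beta)S_2(c^*)} > \beta,
\end{equation*}
one cancels a factor of $\beta(1-\beta)>0$ from both sides and finds this equivalent to $S_1(c^*) > S_2(c^*)$. The analogous manipulation shows that $B_2$ is overrepresented if and only if $S_2(c^*) > S_1(c^*)$. So everything reduces to comparing $S_1$ and $S_2$ at the single point $c^*$.

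Next I would locate $c^*$. Because $S_1$ and $S_2$ are continuous, strictly decreasing, and share median $m$, we have $S_1(m)=S_2(m)=1/2$, so the weighted average $\beta S_1 + (1-\beta)S_2$ also equals $1/2$ at $m$. Since that weighted average is continuous and strictly decreasing, with limits $1$ at $-\infty$ and $0$ at $+\infty$, there is a unique $c^*$ solving $\beta S_1(c^*)+(1-\beta)S_2(c^*)=p_A$, and the position of $c^*$ relative to $m$ is entirely dictated by the sign of $p_A-1/2$: specifically $c^*>m$ when $p_A<1/2$ and $c^*<m$ when $p_A>1/2$.

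Finally I would combine these two steps with Definition~\ref{def1}. In case (i), $p_A<0.5$ yields $c^*>m$; the hypothesis $P_1\succ P_2$ then gives $S_1(c^*)>S_2(c^*)$, so by the reduction $B_1$ is overrepresented. In case (ii), $p_A>0.5$ yields $c^*<m$, so $P_1\succ P_2$ gives $S_1(c^*)<S_2(c^*)$, and the reduction yields overrepresentation of $B_2$.

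There is no real obstacle of substance here; the argument is essentially a three-line calculation once one notices that the entire statement pivots on a single point, $c^*$. The only places requiring a moment of care are verifying the existence and uniqueness of $c^*$ (which is where the continuity and strict monotonicity of the survival functions are used) and checking that $c^*$ lies in the set $\{x : 0<S_{P_1}(x)<1\}$ where Definition~\ref{def1} applies — but both follow from the fact that a continuous strictly decreasing survival function maps $\mathbb{R}$ bijectively onto $(0,1)$.
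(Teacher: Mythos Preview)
Your proposal is correct and follows essentially the same approach as the paper: both arguments establish existence and uniqueness of $c^*$ via the continuity and strict monotonicity of $\beta S_1+(1-\beta)S_2$, locate $c^*$ relative to the common median $m$ according to the sign of $p_A-0.5$, and then reduce the overrepresentation inequality to the pointwise comparison $S_1(c^*)\gtrless S_2(c^*)$ supplied by Definition~\ref{def1}. The only cosmetic difference is order---you perform the algebraic reduction first and the paper performs it last---and your explicit remark that $c^*$ falls in the range where Definition~\ref{def1} applies is a small refinement the paper leaves implicit.
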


\begin{proof} 
Let $\beta \in (0,1)$ be the proportion of $B$ that is $B_1$, 
and let $S_1$ and $S_2$ denote the desirability survival functions for $B_1$ and $B_2$, respectively.  First, it will be shown that there exists a unique ``threshold" desirability cutoff $c^*\in\mathbb{R}$
such that
\begin{equation}\label{eq1} 
\begin{split}
& \beta S_1(c^*)+(1-\beta)S_2(c^*)=p_A \\
&\mbox{ and } \\
& c^*>m \mbox{ if } p_A<0.5 \mbox{ and }c^*<m \mbox{ if }p_A>0.5. 
\end{split}
\end{equation}

To see \eqref{eq1}, let
$g:\mathbb{R}\to (0,1)$ be given by 
$g(c)=\beta S_1(c)+(1-\beta)S_2(c)$. Then  
$g$ is continuous and strictly decreasing
with $g(-\infty)=1$, $g(m)=0.5$, $g (\infty)=0$, so $c^*$ satisfying \eqref{eq1} exists and is unique, and since 
$S_1(m)=S_2(m)=0.5$, $c^*>m$
 if $p_A<0.5$ and $c^*<m$ if $p_A>0.5$.	 

To see (i),  first note by \eqref{eq1} that $c^* > m$, so since $P_1 \succ P_2$, 
\begin{equation}\label{newesteq2}
S_1(c^*) > S_2(c^*).
\end{equation}
Thus, 
$$
\beta(1-\beta) S_1(c^*) > \beta(1-\beta) S_2 (c^*), 
$$
which implies
$$
\beta^2 S_1 (c^*) + \beta(1-\beta) S_1 (c^*) > \beta^2 S_1 (c^*) + \beta(1-\beta) S_2 (c^*)
$$
so
$$
\frac{\beta S_1(c^*)} { \beta S_1(c^*)+(1-\beta)S_2(c^*)} > \beta.
$$

By Definition~\ref{def2}, this completes the proof of (i);  the proof of (ii) follows similarly.
 
\end{proof}

Thus, in this discrete-time setting, if one sex remains non-selective from each generation to the next, for example, then in each successive generation less variable subpopulations of the opposite sex will tend to prevail over more variable subpopulations of comparable average desirability. Although  those successive generations and their desirability distributions are evolving over time, if less variable subpopulations in the opposite sex prevail over more variable subpopulations from each generation to the next, that suggests that over time the opposite sex will tend toward lesser variability. A key assumption here, of course, is the inheritability of variability itself.

That variability {\em per se} may be a heritable trait has recently been established in several different contexts (e.g., \cite{vob107}, \cite{vob117}, and \cite{vob95}). For instance, theories of inherited variability have been developed and applied by animal husbandry scientists and geneticists who are interested in breeding livestock, not only for high averages of desirable traits, but also for uniformity (i.e., low variability; see \cite{vob116}, \cite{vob118}, \cite{vob119}). The above probabilistic model only uses the premise that variability is inheritable; identification of the precise genetic, chromosomal, and epigenetic (including societal) mechanisms for how variability is inherited is beyond the scope of this paper, and is left to the interested reader (cf. \cite{vob115}, \cite{novob12}, \cite{vob48}, and \cite{vob154}).

The next example is an application of Theorem \ref{propA2} and Proposition \ref{newPropA} in a case where the desirability values have gaussian distributions. Note that the exact magnitudes of over- and underrepresentation, even in this standard setting, are not attainable in closed form, but must be estimated numerically.

\begin{exam}\label{exam2} 
Suppose that the desirability values (to sex $A$) of sex $B$ are normally distributed, i.e., if  $X_1$ and $X_2$ are the desirability values of two random individuals chosen from $B_1$ and $B_2$, 
respectively, then $X_1$ has distribution $N(\mu,\sigma^2_1)$ and $X_2$ has distribution $N(\mu,\sigma^2_2)$.
 (The assumption  of normality for the underlying distributions of desirability is not essential; this is merely an illustrative example, and chosen because of the ubiquity of the normal distribution in many population studies. Note the key assumption that the average values, i.e. the medians, are the same.) By Proposition~\ref{newPropA}, $N(\mu,\sigma^2_1)$ is more variable than
 $N(\mu,\sigma^2_2)$ if and only if $\sigma^2_1>\sigma^2_2$.
 
 In particular, suppose $X_1\sim N(100,4)$, 
$X_2\sim N(100,1)$, so $B_1$ is more variable than  $B_2$.  Suppose that $B_1$ and $B_2$ are of equal size, and again consider the two typical cases where sex $A$ is selective with $p_A=0.25$ and where sex $A$ is non-selective with $p_A=0.75$ (see Figure~\ref{fig3}). 

Suppose first that $p_A=0.25$. Using a special function calculator (since the survival functions of normal distributions are not known in closed form), it can be determined numerically that sex $A$'s threshold desirability value cutoff for sex $B$ is $c^*\cong 100.92$,  $S_1(c^*)\cong 0.323$, and $S_2(c^*)\cong 0.179$.  Thus a random individual from subpopulation $B_1$ has nearly twice the probability of mating than one from the less variable subpopulation $B_2$, as is illustrated in Figure~\ref{fig3} with the areas to the right of the green desirability cutoff. Hence $B_1$ will be overrepresented in the subsequent generation.

Next suppose that $p_A=0.75$. Then it can be determined that the threshold desirability value cutoff is $c^*\cong 99.08$, $S_1(c^*)\cong 0.677$, and $S_2(c^*)\cong 0.821$, i.e., a random individual from subpopulation $B_2$ is about one-fifth more likely to be able to mate than one from the more variable subpopulation $B_1$. This is illustrated in Figure~\ref{fig3} with the areas to the right of the purple cutoff. Here again, note the asymmetry in that the selective case is more extreme than the non-selective case, as was seen in Example~\ref{exam1}. 
\end{exam}

\begin{figure}[!ht] 
  \center\includegraphics[width=0.75\textwidth]{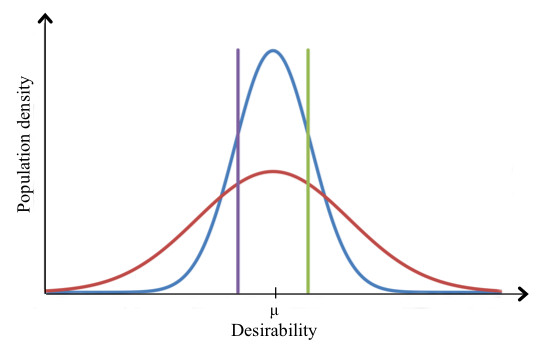}
  \caption{The red curve is the desirability distribution of the more variable normal subpopulation $B_1$ in Example~\ref{exam2} and the blue curve is the desirability distribution of the less variable subpopulation $B_2$. The vertical green line is the threshold cutoff for the opposite sex $A$ so that exactly 25\%  of the composite $B$ population has desirability value above (to the right of) that point. The vertical purple line is the value so that exactly 75\% of the  $B$ population has desirability value above that point.}
 \label{fig3}
\end{figure}

\section{A Continuous-Time Deterministic Model}

In this section, comparisons of the long-term asymptotic behavior of the sizes of competing subpopulations of the same sex and species are modeled using the general structure and logic of classical evolutionary game theory as applied to population dynamics (cf. \cite{novob14}, \cite{novob17}, \cite{novob16}).

Here, sex $B$ consists of two distinct subpopulations $B_1$ and $B_2$, growing in time, whose sizes at time $t$ are given by the continuous stochastic processes $X_1(t)$ and $X_2(t)$, respectively. Letting $x_1(t)=E[X_1(t)]$ and $x_2(t)=E[X_2(t)]$ denote the expected values of the subpopulation sizes at time $t$, the objective here will be to derive a coupled system of ODE's, directly analogous to the coupled systems of ODE's in classical evolutionary game theory, to model the growth rates of the expected values of the sizes of the two subpopulations (cf. \cite{novob2}).

In contrast to the discrete-time model above, here there is no clear delineation between generations, and it will be assumed that the pace of evolution is negligible compared to the pace of reproduction, so the two subpopulations remain distinct, with offspring distributed the same way as the parent subpopulation. In this setting, it will now be seen that if one subpopulation is more variable than the other, then the more variable subpopulation will eventually eclipse the less variable subpopulation if the opposite sex is selective, and the less variable subpopulation will eclipse the more variable one if the opposite sex is non-selective.  

Assume that the desirability distributions of $B_1$ and $B_2$ (to sex $A$) are given by probabilities $P_1$ and $P_2$, respectively, that do not change with the sizes of the subpopulations, i.e., the survival and desirability distribution functions do not change with $t$. For further ease of analysis, assume that the expected population sizes $x_1(t)$ and $x_2(t)$ are strictly increasing and differentiable and that the survival functions $S_1$ and $S_2$ for $P_1$ and $P_2$  are both continuous and strictly decreasing, with identical (unique) medians $m$. In other words, exactly half of each subpopulation $B_1$ and $ B_2$ has desirability value above $m$ to sex $A$ at all times $t>0$, and exactly half of each has desirability values below $m$. 

In this deterministic framework, the composite population of sex $B$ is growing at a rate that is proportional to the fraction $p_A$ of its members that is acceptable to the opposite sex $A$. That is, with the constant of proportionality taken to be 1, 
\begin{equation}\label{eq2}
\frac{d(x_1+x_2)}{dt} = p_A(x_1+x_2).
\end{equation}
Similarly, the expected values of the sizes of both subpopulations $B_1$ and $B_2$ are growing at rates proportional to the fractions $S_i(c^*)$ of each subpopulation that are acceptable to sex $A$ at that time, i.e., which satisfy the coupled system of ordinary differential equations 
\begin{equation}\label{eq3} 
\frac{dx_i}{dt} = x_iS_i(c^*),\quad i=1,2,
\end{equation}
where $c^*=c^*(x_1,x_2)$ is the cutoff value so that the expected proportion of sex $B$ that is above desirability level $c^*$ at time $t$ is exactly $p_A$. Since $P_1$ and $P_2$ and hence $S_1$ and $S_2$ are assumed to be constant in time, $c^*$ satisfies

\begin{equation}\label{eq9A} 
\frac{x_1S_1(c^*)+x_2S_2(c^*)}{x_1+x_2}=p_A.
\end{equation}

\vspace{1em}

The coupled system of ODE's \eqref{eq3} is very closely related to the classical replicator equation (cf.\cite{novob13}, \cite{novob15}), which also captures the essence of selection via acceptability for mating but through rates proportional to deviation from the mean desirability (or fitness), rather than through rates proportional to fractions above selectivity cutoffs.
Analogous to the discrete probabilistic model above, solutions of \eqref{eq3} are not generally available in closed form, and must be approximated numerically, as will be seen in Example~\ref{exam3} below. 

Qualitative comparisons of the rates of growth of competing subpopulations satisfying (4) are possible, however, and the next theorem shows that the selectivity-variability principle above is also valid in this setting, in the following sense. If $P_1$ is more variable than $P_2$,  and if  $p_A<0.5$, i.e., if sex $A$ is selective, then the relative instantaneous rate of growth of $B_1$ exceeds that of  $B_2$, and the proportion of sex $B$ that is from $B_1$ approaches 1 in the limit as time goes to infinity. Conversely, if $p_A>0.5$, i.e., if sex $A$ is non-selective, then the relative instantaneous rate of growth of $B_2$ exceeds that of $B_1$, and the less variable subpopulation $B_2$ prevails in the limit. This same conclusion can be extended to more general settings, such as time-dependent acceptability fractions $p_A(t)$, and these generalizations are left to the interested reader.

\vspace{1em}
Recall that $P_1$ and $P_2$, respectively, are the desirability distributions (to sex $A$) of subpopulations $B_1$ and $B_2$ of sex $B$.

\begin{thm}\label{propB1} 
Suppose subpopulation $B_1$ is more variable than $B_2$, i.e., $P_1 \succ P_2$. 
 \begin{enumerate}[label=(\roman*)]
\item If 
$p_A<0.5$, i.e., if sex $A$ is selective, then the relative rate of growth of $B_1$ exceeds that of 
$B_2$, 
\begin{equation}\label{eq4}  
\frac1{x_1}\frac{dx_1}{dt} > \frac1{x_2}\frac{dx_2}{dt}.
\end{equation}
Moreover,  
$\frac{x_1}{x_1+x_2}\to 1$ as $t\to\infty$.
\item If $p_A>0.5$, i.e., if sex $A$ is non-selective, then the relative
rate of growth of $B_2$ exceeds that of $B_1$, 
\begin{equation}\label{eq5}
\frac1{x_2}\frac{dx_2}{dt} > \frac1{x_1}\frac{dx_1}{dt}.
\end{equation}
Moreover,
$\frac{x_1}{x_1+x_2}\to 0$ as $t\to\infty$.
\end{enumerate}
\end{thm}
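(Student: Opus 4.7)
The plan is to analyze the evolution of the subpopulation fraction $\beta(t) = x_1(t)/(x_1(t)+x_2(t))$ and deduce everything from its dynamics. From \eqref{eq3} the relative growth rate of $B_i$ equals $S_i(c^*(t))$, so inequalities \eqref{eq4} and \eqref{eq5} reduce to pointwise comparisons between $S_1(c^*(t))$ and $S_2(c^*(t))$.

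First I would locate $c^*(t)$. Combining \eqref{eq2} and \eqref{eq3} shows that the cutoff at time $t$ is characterized implicitly by
\begin{equation*}
\beta(t)\, S_1(c^*(t)) + (1-\beta(t))\, S_2(c^*(t)) = p_A.
\end{equation*}
At each fixed $t$ the argument in the proof of Theorem~\ref{propA2} applies verbatim: the left side is continuous and strictly decreasing in $c$, takes the value $1/2$ at $c = m$ (since $S_1(m) = S_2(m) = 1/2$), and has limits $1$ and $0$ at $-\infty$ and $+\infty$. Hence $c^*(t)$ exists, is unique, and satisfies $c^*(t) > m$ in case (i) and $c^*(t) < m$ in case (ii). Applying Definition~\ref{def1} to $P_1 \succ P_2$ immediately yields $S_1(c^*(t)) > S_2(c^*(t))$ in case (i) and the reverse in case (ii), which together with $\dot{x}_i/x_i = S_i(c^*(t))$ proves \eqref{eq4} and \eqref{eq5}.

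For the asymptotic statements I would pass to an ODE for $\beta$. A direct quotient-rule computation using \eqref{eq3} gives
\begin{equation*}
\dot{\beta} = \beta(1-\beta)\bigl(S_1(c^*(t)) - S_2(c^*(t))\bigr),
\end{equation*}
which is strictly positive in case (i) and strictly negative in case (ii). Hence $\beta(t)$ is strictly monotone and bounded, so it converges to some limit $\beta_\infty \in [0,1]$. To rule out $\beta_\infty \in (0,1)$ in case (i), I would use the implicit relation above together with continuity and strict monotonicity of $S_1, S_2$ to show that $c^*$ is a continuous function of $\beta$, so that $c^*(t)$ converges to a finite value $c^\infty > m$; this would force $\dot{\beta}$ to tend to the strictly positive constant $\beta_\infty(1-\beta_\infty)\bigl(S_1(c^\infty) - S_2(c^\infty)\bigr)$, contradicting the convergence of $\beta$. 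Since $\beta$ is increasing from a positive initial value, $\beta_\infty = 0$ is excluded, leaving $\beta_\infty = 1$. Case (ii) is handled symmetrically.

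The main obstacle, I expect, is making the exclusion of $\beta_\infty \in (0,1)$ fully rigorous — specifically, establishing continuity of $c^*$ as a function of $\beta$ (via the implicit function theorem applied to the defining equation, using $\beta S_1'(c) + (1-\beta)S_2'(c) \neq 0$), and confirming that $c^*(t)$ stays bounded away from $\pm\infty$ as $t \to \infty$. Once the limiting $c^*$ is identified as a finite quantity strictly above (or below) $m$, both the pointwise growth-rate comparison and the asymptotic fraction conclusion follow from the tools already developed in Theorem~\ref{propA2} and Definition~\ref{def1}.
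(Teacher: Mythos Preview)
Your derivation of \eqref{eq4} and \eqref{eq5} matches the paper exactly: locate $c^*(t)$ on the correct side of $m$ via the convex-combination equation, then invoke $P_1\succ P_2$.

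For the asymptotic statements the paper takes a more direct route than your $\dot\beta$-ODE/contradiction argument. It observes that, regardless of $t$ (equivalently, of $\beta$), the cutoff $c^*$ is trapped in the \emph{fixed} compact interval $[S_2^{-1}(p_A),\,S_1^{-1}(p_A)]$ (since a convex combination of $S_1(c)$ and $S_2(c)$ lies between them). On that interval $S_1-S_2$ is continuous and strictly positive, hence bounded below by some $\delta>0$ uniformly in $t$. Feeding this into \eqref{eq3} gives $\frac{d}{dt}\ln(x_1/x_2)\ge\delta$, so $\ln(x_1/x_2)\ge\delta t+\alpha$ and $\beta\to1$ follows immediately, with an explicit exponential rate. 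Your approach is correct in spirit and leads to the same conclusion, but the paper's compactness bound sidesteps the need to track $c^*$ as a function of $\beta$ and yields quantitative information for free.

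One technical caution: your proposed use of the implicit function theorem invokes $S_1'$ and $S_2'$, but the standing hypotheses only give $S_1,S_2$ continuous and strictly decreasing, not differentiable. Continuity of $c^*$ in $\beta$ still holds by an elementary strict-monotonicity argument, so the gap is repairable, but as written that step is not justified. The paper's argument avoids this issue entirely.
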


\begin{proof}[Proof of $(i)$] Analogous to the argument for 
Theorem~\ref{propA2}, define  $g:\mathbb{R}\to (0,1)$ by
\begin{equation*}\label{eq6}
g(c)=\frac{x_1S_1(c)+x_2S_2(c)}{x_1+x_2},
\end{equation*}
where $S_1$ and $S_2$
are the desirability survival functions for $P_1$ and $P_2$, respectively.  Recall that $S_1$ and $S_2$ are both continuous and strictly decreasing with identical medians $m>0$, and fix $t > 0$. Since $g$ is continuous
and strictly decreasing with $g(-\infty)=1$, 
$g(m)=0.5$, and $g(\infty)=0$,  there exists a unique threshold
desirability cutoff	$c^*=c^*(t)$ satisfying 
\begin{equation*}\label{eq7}
\frac{x_1S_1(c^*)+x_2S_2(c^*)}{x_1+x_2}=p_A,
\end{equation*}
where, as before, $p_A$ is the most desirable fraction of sex $B$ that is acceptable to
sex $A$, and $c^*=c^*(t)$ is the threshold desirability cutoff for sex 
$A$ for the combined
populations of sex $B=B_1\cup B_2$ at time $t$.

Note that $S_1(m)=S_2(m)=0.5$, so since $p_A<0.5$, $c^*>m$. Since 
$P_1$ is more variable than $P_2$ this
implies that $S_1(c^*)>S_2(c^*)$. 
Since $S_1(c^*)$ and $S_2(c^*)$ are the proportions of 
$B_1$ and $B_2$, respectively,
that are above the threshold cutoff at time $t >0$,  \eqref{eq3} implies \eqref{eq4}.

To see that  $\frac{x_1}{x_1+x_2}\to 1$ as $t\to\infty$, 
note that since $P_1$ is more variable than $P_2$,  $m < S_{2}^{-1}(p_A)<S_{1}^{-1}(p_A)$ for
$p_A<0.5$. Clearly $c^*\in[S_{2}^{-1}(p_A),S_{1}^{-1}(p_A)]$ for all 
$t > 0$, so since $S_2(x)<S_1(x)$ for all $x>m$, the continuity of
$S_1$ and  $S_2$ implies the existence of $\delta >0$ so that 
$$
S_1(c^*)>S_2(c^*)+\delta
\mbox{ for all } c^* \in [S_{2}^{-1}(p_A),S_{1}^{-1}(p_A)]\mbox{ and for all } t>0.
$$
Thus by \eqref{eq3}, 
$$
\frac1{x_1}\frac{dx_1}{dt} > \frac1{x_2}\frac{dx_2}{dt} +
\delta\mbox{ for all } t>0,
$$
so $\ln x_1-\ln x_2\ge \delta t+\alpha$,
which implies that  $\frac{x_1}{x_1+x_2}\to 1$ as $t\to\infty$, completing the proof
of (i).

The proof of (ii) is analogous. 
\end{proof}

The next example
 is a numerical application of Theorem \ref{propB1}  and Proposition \ref{newPropA} where the desirability values decrease exponentially from the common median. Note that, as was the case in Example~\ref{exam2} above, the solution, in this case of the underlying coupled system of ODE's, is not available in closed form, but must be approximated numerically.

\begin{exam}\label{exam3} 
Let the survival functions $S_1$  and $S_2$ for subpopulations 
$B_1$ and $B_2$ be Laplace distributions with 
$S_1(x)={e^{-x}}/2$ for $x\ge 0$ and $S_2(x)={e^{-2x}}/2$ for $x\ge 0$ (see Figure~\ref{newFig4}). (For ease of exposition, the common median desirability of both subpopulations here is taken to be 0; translation to a different median value is trivial.) By Proposition~\ref{newPropA}, $P_1 \succ P_2$, so subpopulation $B_1$ is  more variable than  $B_2$.

\begin{figure}[!ht] 
  \center\includegraphics[width=0.75\textwidth]{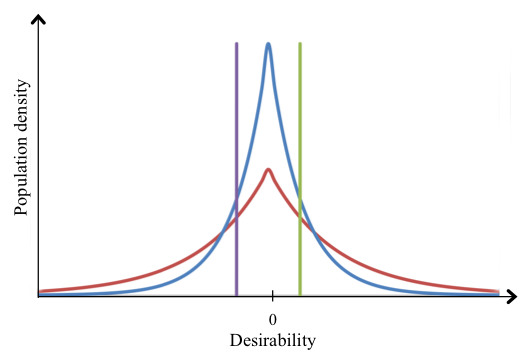}
  \caption{The red curve is the density of the desirability value of the more variable subpopulation $B_1$ in Example~\ref{exam3}, and the blue curve is the density of the less variable subpopulation $B_2$.  If $B_1$ and $B_2$ are of equal size, then the vertical green line is the threshold cutoff for the opposite sex $A$ so that exactly 25\%  of the composite $B$ population has desirability value above that point. The vertical purple line is the value so that exactly 75\% of the  $B$ population has desirability value above that point. Note that the desirability values of both drop off exponentially fast from the mean in both directions.}
 \label{newFig4}
\end{figure}

Suppose first that sex $A$ is selective and accepts only the most desirable quarter of individuals in sex $B$, i.e., $p_A=0.25$. Using \eqref{eq2} and \eqref{eq3}, and noting that $S_2(x)=2S^2_1(x)$ for $x\ge 0$ yields the following coupled system of ordinary differential equations: 

\begin{equation}\label{eq9}
\begin{split}
\frac{dx_1}{dt} &= x_1\left(\frac{\sqrt{x^2_1+2x_1x_2+2x^2_2} - x_1}
{4x_2}\right)\\
\frac{dx_2}{dt} 
&=\left(\frac{x_1+x_2}4\right)-x_1\left(\frac{\sqrt{x^2_1+2x_1x_2+2x^2_2}-x_1}
{4x_2}\right).
\end{split}
\end{equation}

No closed-form solution of  \eqref{eq9} is known, and Figure~\ref{fig5} illustrates a numerical solution with the initial condition $x_1(0)=x_2(0)=1$. Note that in this case where sex $A$ is selective, the more variable subpopulation $B_1$ eventually eclipses the less variable $B_2$.

\begin{figure}[!ht]  
  \center\includegraphics[width=1.0\textwidth]{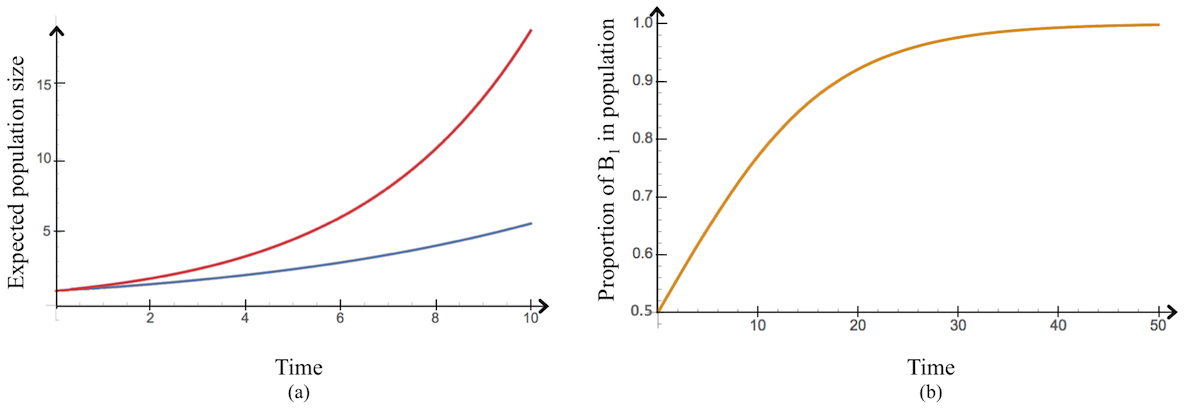}
\caption{Selective case---Population sizes and ratio. The graphs (a) of the more variable $x_1(t)$ in red and $x_2(t)$ in blue, and (b) the ratio $x_1(t)/(x_1(t)+x_2(t))$ satisfying (\ref{eq9}).
}
\label{fig5}
\end{figure}

Suppose next that sex $A$ is non-selective and accepts only individuals in in the most desirable three-quarters of sex $B$, i.e., $p_A=0.75$. Using \eqref{eq2} and \eqref{eq3} again, and noting that  $S_2(x) = 4S_1(x) - 2S^2_1(x) - 1$ for $x \leq 0$ yields the following system:

\begin{equation}\label{eq10}
\begin{split}
\frac{dx_1}{dt} &= x_1\left(\frac{x_1+4x_2-\sqrt{x^2_1+2x_1x_2+2x^2_2}}
{4x_2}\right)\\
\frac{dx_2}{dt}&=\left(\frac{3x_1+3x_2}4\right)-x_1\left(\frac{x_1+4x_2-
\sqrt{x^2_1+2x_1x_2+2x^2_2}}
{4x_2}\right).
\end{split}
\end{equation}

Figure~\ref{fig6} illustrates a numerical solution of the coupled system of ODE's \eqref{eq10} with the same initial condition $x_1(0)=x_2(0)=1$.  Note that in this situation where sex $A$ is non-selective, the less variable subpopulation $B_2$ eventually eclipses the more variable $B_1$.

\begin{figure}[!ht] 
  \center\includegraphics[width=1.0\textwidth]{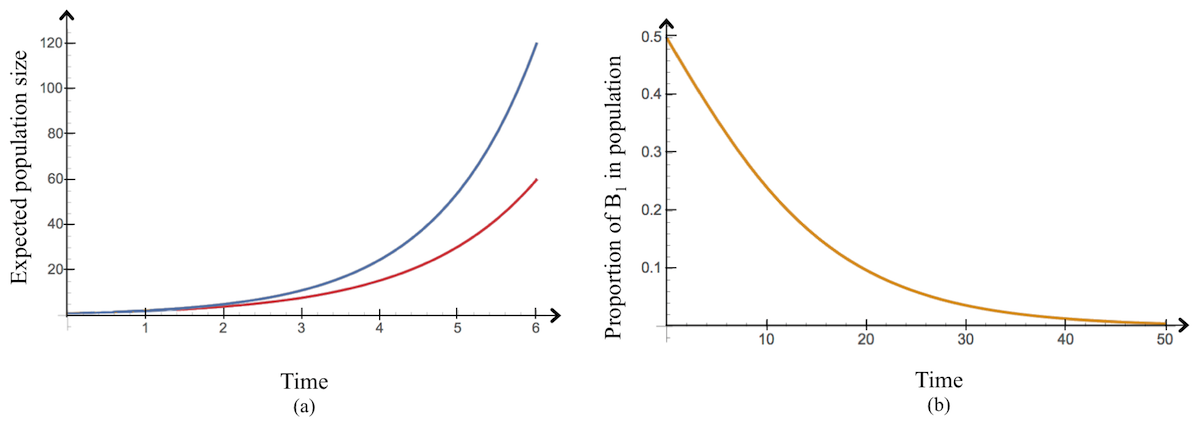}
\caption{Non-selective case---Population sizes and ratio.
The graphs (a) of the more variable subpopulation size $x_1(t)$ in red and less variable $x_2(t)$ in blue, and (b) the ratio $x_1(t)/(x_1(t)+x_2(t))$  satisfying (\ref{eq10}).
}
\label{fig6}
\end{figure}

\end{exam}

\noindent
Note also that the birth process model above also implicitly includes simple {\em birth-death} processes, via the simple observation that a population growing, for example, at a rate of eight per cent and dying at a rate of three percent, can be viewed as a pure birth process growing at a rate of five per cent.

\section{Darwin's Question, Selectivity and Parenting}

As mentioned above, recent research has generally confirmed Darwin's observation of greater male variability for many species and traits. 
For example, in the realm of human studies, 
\begin{quote}
The principal finding is that human intrasex variability is significantly higher in males, and consequently constitutes a fundamental sex difference...The data presented here show that human greater male intrasex variability is not limited to intelligence test scores, and suggest that generally greater intrasex variability among males is a fundamental aspect of the differences between sexes. Birth weight, blood parameters, juvenile physical performance, and university grades are parameters which reflect many aspects of human biology. In particular, the differences in variations in birth weight strongly suggest that social factors cannot account for all of the sex differences in variability \cite[pp.~198, 204--205]{vob5}.
\end{quote} 
The selectivity-variability principle introduced above is neutral with respect to the two sexes, and by itself does not explain any differences in variability between them - either that there should be a difference in variability between the sexes, or which sex might be expected to be more variable. But together with two other basic biological tenets, the selectivity-variability principle can perhaps help provide a theory for Darwin's observation and the empirical evidence of greater male variability reported in many subsequent studies.

One of these two additional biological tenets is parenting-selectivity, which posits that a ``basic cross-species pattern is that the sex with the slower potential rate of reproduction invests more in parenting, [and] is selective in mate choices [and the] sex with the faster potential rate of reproduction invests less in parenting, [and] is less selective in mate choices'' \cite[p.~273]{vob82}.
For example, ``When females invest more in offspring than males, parental investment theory says that selection operates so that females discriminate among males for mates...and males are indiscriminate''  \cite[p.~2037]{novob5}; (see also \cite{novob6} and \cite{novob9}). Although the genetic mechanisms of pre- and postcopulatory sexual selection are still far from being fully understood, molecular genetic and genomic tools now enable their detailed experimental testing \cite[p.~300]{novob7}.

The second additional biological tenet is gender-parenting, which says that ``typical species [have] less parental investment by males than females''  \cite[p.~2037]{novob5} which occurs, for example, in more than ninety-five percent of mammalian species  \cite[p.~175]{vob3}. 
Combining these two biological maxims with the selectivity-variability principle suggests an answer to Darwin's question. By the gender-parenting tenet, females in typical species invest more in parenting than males, so by the parenting-selectivity tenet females will typically be relatively selective and males relatively non-selective.  Then the selectivity-variability principle implies that females in such species will tend toward less variability and the males toward greater variability. 

If both sexes in a certain species began with comparable mid-range variability, for example, and if either its females were generally selective 
$(p_F<0.5)$  or its males were generally non-selective $(p_M>0.5)$, or both, this would  have led to the relatively greater male variability observed by Darwin. In the constraints of this cross-species model, therefore, this would offer {\em two independent explanations} for the appearance of greater male variability in many species. Unlike other species, of course, in humans, cultural factors may also play a role in the perceived differences in variability between the sexes.

\section{Conclusions}

The goal here has been neither to challenge nor to confirm Darwin's and other researchers' observations of greater male variability  for any given species or any given trait, but rather to propose an elementary mathematical theory based on biological/evolutionary mechanisms that might serve as a starting point to help explain how one sex of a species might tend to evolve with greater variability than the other sex. As such, the contribution here is a general theory intended to open the discussion to further mathematical and statistical modeling and analysis.

\section{Further research directions}

There are many natural generalizations, modifications, and extensions of the basic selectivity-variability theory described above, including the following: 
\\

\noindent 
{\em Time dependence}. Desirability distributions and/or selectivity that vary with time, for example individual desirability decreasing with time and the proportion of acceptable mates increasing with time.
\\

\noindent
{\em Desirability}. Desirability that depends on several parameters (e.g., size and intelligence); or desirability that is a random variable (e.g., the perception of an individualÕs desirability  by the opposite sex is not perfect.)
\\

\noindent
{\em Selectivity criteria}. Acceptability of potential mates of the opposite sex that depends both on desirability  and proximity (e.g., nearby candidates may be more acceptable than distant candidates with higher desirability ); desirability  thresholds that vary within members of the same subpopulation; or acceptability criterion that is not a step function, but becomes continuously higher with higher desirability levels (similar to replicator equation models).
\\

\noindent
{\em Game-theoretic versions}. Subpopulations that compete and may choose their own selectivity cutoff thresholds - e.g., may choose to be less selective to increase their probability of having offspring.
\\

\noindent
{\em Non-identical means}. Competing subpopulations with unequal means, and the relative advantages between having a higher mean and lower variance. 
\\

\noindent
{\em Offspring}. Expected numbers of offspring that depend on desirability levels of both parents; offspring types that are randomly distributed with desirability and selectivity criteria depending on those of the parents, both models where each sex's selectivity and/or variability depend only on that of its parent of the same sex, and models where they depend on the those of both parents.
\\

\noindent
{\em Variability}.  Alternative notions of variability that are defined via standard deviation after outliers are removed, or are one-sided (e.g., only upper tail comparisons are important). 
\\

\noindent
{\em Cultural factors}. There are also many cultural aspects of research on the variability hypothesis suggested by empirical evidence of greater male variability in humans and by the selectivity-variability theory above. These include the effects of monogamy, education, religion, social status, etc. on differences in selectivity and variability between the sexes, and the rate at which the disparity in variability between the sexes is disappearing, as predicted by this theory and observed in empirical studies. 
\\

\noindent
{\em Univariate applications}. The selectivity-variability principle may also be applied to univariate decision-making. For example, the so-called ``Texas Top Ten Per Cent Law" guarantees Texas students who graduate in the top ten percent of their high school class automatic admission to all state-funded universities. The selectivity-variability principle implies that if a student has a choice of tutoring options which have similar average success rates, options with greater variability will be superior to those with lesser variability.
\\

\noindent
{\em Non-biological applications}. One colleague has suggested that a similar selectivity/variability principle may also apply to some chemical or quantum processes where two reagents interact, and one (or both) may have several different forms that vary in affinity for the other reagent.

 \section*{Acknowledgments}
The author is grateful to several referees and to Professors Ron Fox, Igor Rivin, Erika Rogers, Marjorie Senechal, Sergei Tabachnikov, and  Julius Van Der Werf for very helpful comments.

\clearpage

\begin{appendices}

\renewcommand{\thesection}{\Alph{section}.}
\section{Empirical Evidence}

Recall that the variability hypothesis pertains to general traits in animal species with male and female sexes, and is a general hypothesis that clearly does not apply to every trait in every sexually dimorphic species. This century has already produced much new empirical evidence on the variability hypothesis in different contexts, and the following are direct excerpts from a representative selection of such studies, most of which refer to humans.  These are grouped chronologically first by primary studies with findings that support the variability hypothesis, and then by primary studies that do not support greater male variability, and finally, those that are mixed. (Note:  Readers are welcome to suggest additional references.)

\renewcommand{\thesection}{\Alph{section}}

\subsection{Primary Research Studies Supporting the Greater Male Variability Hypothesis}

\begin{enumerate}


\item
``Boys were over-represented at the low and high extremes of cognitive ability"  \cite[p.~533]{vob20}.


\item
``dispersion dimorphism was significantly present, with greater male variability, in the case of skinfolds" \citep{vob217}.


\item
``We found greater variance, by Levene's test of homogeneity of variance, among boys at every age except age two despite the girls' mean advantage from ages two to seven" \cite[p.~39]{vob17}.

\item
``for all three tests there were substantial sex differences in the standard deviation of scores, with greater variance among boys. Boys were over represented relative to girls at both the top and the bottom extremes for all tests, with the exception of the top 10\% in verbal reasoning \ldots In relation to sex differences in variability, the current results support the general finding of greater male variability" \cite[pp.~463, 475]{vob37}. 


\item
``Males have only a marginal advantage in mean levels \ldots but substantially greater variance" \cite[p.~451]{vob21}.

\item
 ``Males are more variable on most measures of quantitative and visuospatial ability, which necessarily results in more males at both high- and low-ability extremes \ldots The results from multiple large-scale studies have confirmed greater variability among males than among females in many cognitive domains, including on measures of mathematics, science, and spatial abilities" \cite[p.~1, 22]{vob25}.


\item
``within species size variation is significantly larger in males" \cite[p.~630]{vob105}.

\item
``our analyses show greater male variability, although the discrepancy in variances is not large \ldots There is evidence of slightly greater male variability in scores, although the causes remain unexplained" \cite[p.~495]{vob28}.

\item
 ``[we] reviewed the history of the hypothesis that general intelligence is more biologically variable in males than in females and presented data \ldots which in many ways are the most complete that have ever been compiled, [that] substantially support the [greater male variability] hypothesis" \cite[p.~529]{vob29}.

\item
``Males show greater variability than females in science" \cite[p.~122]{vob100}.

\item
``Gender gaps vary but differences in variabiltiy are a ROBUST PHENOMENON \ldots International testing results show greater variance in boys' scores than in girls' \ldots These results imply that gender differences in the variance of test scores are an international phenomenon and that they emerge in different institutional settings" \cite[pp.~1331--1332]{vob32}.


\item
 ``Despite the modest differences at the center of the distribution, the greater variability of male scores resulted in large asymmetries at the tails, with males out-numbering females by a ratio of 7 to 1 in the top 1\% on tests of mathematics and spatial reasoning \ldots greater male variance is observed even prior to the onset of preschool" \cite[pp.~220--221]{vob19}.

\item
``Boys had greater variability in these IQ scores" \cite[p.~42]{vob22}.

\item
``The principal finding is that human intrasex variability is significantly higher in males, and consequently constitutes a fundamental sex difference \ldots The data presented here show that human greater male intrasex variability is not limited to intelligence test scores, and suggest that generally greater intrasex variability among males is a fundamental aspect of the differences between sexes. Birth weight, blood parameters, juvenile physical performance, and university grades are parameters which reflect many aspects of human biology. In particular, the differences in variations in birth weight strongly suggest that social factors cannot account for all of the sex differences in variability" \cite[pp.~198, 204--205]{vob5}.

\item
``With one exception...all variance ratios were greater than 1.0" \cite[p.~395]{vob31}.

\item
``We found sex differences in the variance of reading achievement in all three studies analysed. The biggest variance ratio 1.20 was in the PISA 2003 study and it was 1.08 in both PIRLS studies. In the PISA study, boys showed greater variance in reading comprehension than the girls in all countries, and in the PIRLS studies, the boys' variance was larger in most countries" \cite[p.~11]{vob101}.


\item
``By age 10 the boys have a higher mean, greater variance and are over-represented in the high tail. Sex differences in variance emerge early -- even before pre-school -- suggesting that they are not determined by educational influences" \cite[p.~26]{vob72}.

\item
``The hypothesis of `greater male variability' was supported in most domains" \cite[p.~475]{vob132}.

\item
 ``The overall variance ratio in Study 1 was 1.07. That is, males displayed a somewhat larger variance \ldots In Study 2, the average variance ratio was 1.09" \citep{vob51}.


\item
``the variability analyses tended to support the Greater Male Variability Hypothesis" \cite[p.~807]{vob76}.

\item
``We find that male students exhibit greater residual as well as raw variability for this data set" \cite[p.~2947]{vob133}.


\item
``Seven international tests revealed that on average the variance for males was 12\% larger than that for females" \cite[p.~132]{vob77}.


\item
``the variances in the personality descriptions by informants were higher for male than for female targets" \cite[p.~142]{vob18}.

\item
``In sum, the data largely supported the greater male variability hypothesis for mathematics achievement and general student achievement  \ldots In most countries, boys demonstrated larger variability than girls in (manifest) performance scores" \cite[pp.~390--391]{vob126}.

\item
``Overall, the results were consistent with previous research, showing small mean differences in the three domains, but considerably greater variability for males" \cite[p.~263]{vob30}.


\item
``our research team [New Paltz Evolutionary Psychology Lab] has recently uncovered that in humans, sex differences in variability across many behavioral and physical domains ends up being the rule \ldots for a large number of variables, males demonstrate[d] greater variability" \citep{vob1}.

\item
 ``For mathematics performance, across three meta-analyses and a wide variety of samples, variance ratios consistently range between 1.05 and 1.20 [i.e., males consistently have between five and twenty percent higher variance than females] \ldots Similarly, for verbal performance, variance ratios range between 1.03 and 1.16" \cite[p.~390]{vob84}. 

\item
``This study weighs in on a number of hypotheses related to the nature of sex differences in broad and narrow/specific cognitive abilities \ldots This research also shows, quite compellingly, that the variability hypothesis is plausible and impacts both manifest and latent analyses of general ability'' \cite[p.~50]{vob182}.

\item
``Analysis of 5,772,811 examinations from 1990 to 2009 revealed a switch in average performance, from men performing equally well or better than women to women performing better than men. However, greater male than female variability in performance did not change" \cite[p.~315]{vob189}.

\item
 ``The larger score variance of boys in general knowledge and most domains observed in the current study is consistent with prior evidence regarding general intelligence" \cite[p.~10]{vob44}.

\item
``Levene's test indicated that high school male students had greater variability in general knowledge. This is in accordance with all previous Croatian studies of gender differences in general knowledge" \cite[p.~132]{vob185}.


\item
``the gender difference for Russian high school students in mean mathematical test scores is negligible (d=0.05). However, boys show a greater variance of test scores than girls (VR=1.12). Both findings are consistent with the results previously reported in the studies based on US and international data \ldots greater male variability in mathematical test scores that we find with the Russian data is a more robust phenomenon reported in many other studies" \cite[p.~14]{vob180}.

\item
``the results of the study supported the greater male variability hypothesis in urban and rural samples \ldots the results of the present study found that the greater male variability hypothesis in creativity was consistent across different samples" \cite[pp.~85, 88]{vob13}.

\item
``Consistent with previous research, the variability of boys' performance in science was larger than that of girls' \ldots Variance ratios across all grades exceeded Feingold's (1994) criterion for greater male variability and were comparable to that found for mathematics. These variance ratios were also stable across the time period examined, with no association with year of assessment or interaction with grade" \cite[p.~651]{vob36}.


\item
``Twelve databases from IEA [International Association for the Evaluation of Educational Achievement] and PISA [Program for International Student Assessment] were used
to analyze gender differences within an international perspective from 1995 to 2015 \ldots The `greater male variability hypothesis' is confirmed" \cite[p.~1]{vob10}.

\item
``The distribution of general intelligence is broader in men than in women while neither is smarter on average" \cite[p.~iii]{vob54}.

\item
``Although mean scores of men and women did not differ \ldots a significant difference in variability of scores was observed \ldots The effect size was large (VR = 1.82) and statistically significant" \cite[p.~468]{vob124}.

\item
``One unexpected observation is that the excess variability of males relative to females tends to be greater at higher levels of socio-economic development \ldots Also, greater male variability is a near-universal finding. The similarities of sex differences in this world-wide sample of countries are difficult to explain with cultural or economic causes alone. They are more compatible with biologically based differences that show little variation between human populations" \cite[p.~247]{vob183}.


\item
``Our analysis suggests that plastic traits, such as size, and especially those under strong sexual selection, may indeed show more variation in males, when males are the bigger sex" \cite[p.~9]{vob61}.
 
\item
``There was generally greater male variance across structural measures [in the human brain]" \cite[p.~2]{vob49}.

\item
``In absolute and percentage terms, males are more represented in the upper and lower score ranges \ldots The percentage of female students in the top score ranges has decreased on the new SAT\ldots proportionally 45\% more males are in the 1400--1600 [SAT] score range than females" \cite[p.~7]{novob31}.

\item
``On average, male variability is greater than female variability on a variety of measures of cognitive ability, personality traits, and interests \ldots This finding is consistent across decades \ldots There is good evidence that men are more variable on a variety of traits, meaning that they are over-represented at both tails of the distribution (i.e., more men at the very bottom, and at the very top), even though there is no gender difference on average" \citep{vob9}.

\item
``We observed significantly greater male than female variance for several key brain structures, including cerebral white matter and cortex, hippocampus, pallidum, putamen, and cerebellar cortex volumes" \cite[p.~1]{vob47}.

 
\item
``greater male variability [was] found based on mean and variability analyses, respectively" \cite[p.~1]{vob135}.

\item
``this study concluded that males' variance was much larger than females" \cite[p.~579]{vob128}.

\item
``Here, we use recent meta-analytic advances to compare gender differences in academic grades from over 1.6 million students. In line with previous studies we find strong evidence for lower variation among girls than boys" \cite[p.~1]{vob85}.


\item
``boys were found to have more variability in mathematics achievement than girls" \cite[p.~27]{vob153}.

\item
``the performance of boys was more variable than that of girls in most nations, consistent with the greater male variability hypothesis \ldots the performance of males showed greater variability than for females, which leads to a higher proportion of high-achieving male students in STEM" \cite[p.~25]{vob129}.

\item
 ``we replicate and expand Baye and MonseurÕs work, and \ldots broadly confirm that variability is greater for males internationally" \citep{vob123}.


\item
``For both tasks, male and female mean performance was similar across four years of testing; however, males did exhibit a wider range of variation in performance on the reversal spatial task compared with females" \citep{vob174}.

\item
``These results provide evidence that greater male neuroanatomical variability extends beyond humans, and suggest both evolutionary and developmental explanations for this phenomenon" \cite[p.~1]{vob173}.

\item
``In n = 3069 participants, from 8 to 95 years of age, we found widespread greater variability in males compared with females in cortical surface area and global and subcortical volumes for discrete brain regions \ldots In conclusion, we demonstrated that males are more variable compared with females in individual brain regions with regard to surface area and volume measures" \cite[pp.~5420, 5428]{vob178}.

\item
 ``Our meta-analysis of social-dilemma studies based on 40 samples with 8,123 participants strongly supports our hypotheses and provides empirical evidence of greater male variability in cooperation" \cite[p.~9]{vob172}.

\item
 ``the largest-ever mega-analysis of sex differences in variability of brain structure, based on international data spanning nine decades of life \ldots The present study included a large lifespan sample and robustly confirmed previous findings of greater male variance in brain structure in humans. We found greater male variance in all brain measures, including subcortical volumes and regional cortical surface area and thickness, at both the upper and the lower end of the distributions" \cite[pp.~6, 23]{vob167}.


\item
``the results of variability analyses suggest a pattern of greater male variability: (1) men exhibited greater variance than women in the overall distribution of creative self-efficacy scores, with the male/female variance ratio (VR) = 1.64, and (2) men were overrepresented at both the high and low extremes of the score distribution, with the male/female ratio = 2.62 Ð 9.99" \citep{vob216}.

\item
``we provide modest support for the `greater male variability hypothesis' \ldots using measures that are highly unlikely to be subject to social or cultural influences" \citep[p.~26]{vob226}

\item
 ``Our observation is consistent with the greater male variability hypothesis" \citep[p.~5]{vob242}.
 
 \item
``as predicted by Wainer's rule, males present overall more variance in size and shape, albeit this is statistically significant only for total cranial size" \cite[p.~2789]{vob231}.

 \item
``In a meta-analysis of experimental economics studies with more than 50,000 individuals in 97 samples, we find converging evidence for greater male variability in time, risk, and social preferences" \citep{vob219}.

\item
``Our findings thus generally support greater male variability, a phenomenon which is known to extend beyond brain measures \cite[p.~4635]{vob230}.


\item
``The results confirmed the existence of greater male variability (GMV) in DT" \citep{vob225}.

\item
 ``This represents evidence of GMV for brain size in a non-human primate species" \citep[p.~1]{vob237}.
 
 \item
``In the two-person beauty contest \ldots we observe greater male variability in strategic sophistication among males \ldots. As a result, there are fewer women among the bottom and top performers in the game" \cite[p.~288]{vob209}.
 
 \item
``The finding that boys demonstrated greater heterogeneity in social skills at nearly every grade level (except in first and second grade) is consistent with past research. This finding is not altogether different from findings supporting greater male variability in other psychological and educational variables such as intelligence, achievement, or personality" \citep[pp.~497-98]{vob254}.
 
\item
``We analyzed a large database on energy expenditure in adult humans....[and] found that even when statistically comparing males and females of the same age, height, and body composition, there is much more variation in total, activity, and basal energy expenditure among males \ldots Our results indicate considerable GMV in human energy expenditure in terms of TEE, BEE, and AEE...We also found GMV in key measures of body condition associated with energy expenditure \citep[p.~1]{vob223}.

\item
``We found no evidence for greater female variability on any measure. Overall, males had greater variability than unstaged females" \citep[p.~1]{vob236}.

\item
``Notably, in the dataset the ultimately highest and lowest-performing pupils were boys. Also, in both extremes (scores $<$ 200 and $>$800), the number of boys is twice that of girls. This fits with the greater male variability hypothesis" \cite[p.~70]{vob227}.

\item
``A consistent pattern of GMV in hearing thresholds was seen across frequencies in both datasets. In addition, both across-ear and within-ear correlations between thresholds were consistently greater in males than females" \citep[p.~1]{vob238}.

\item
``There were no factors with higher female than male standard deviation, therefore also the total score had greater male variability ($\Delta$ SD = 1.90; F = 6.031; p = .014)" \cite[p.~276]{vob221}.


\item
``we found limited support for hypotheses regarding greater male-than-female variability" \citep[p.~1]{vob247}.

\item
``total and perhaps also activity EE, associated with many morphological and physiological traits combined, do exhibit GMV most prominently during the reproductive life stages" \citep{vob258}.

\item
 ``Results confirmed greater male variability in thing orientation, but positive skewness for women" \citep[p.~1]{vob239}.


\item

 ``The variability was higher in males/men than in females/women in the total level of autistic traits, social behaviors/social skills, communication/mindreading, attention to details/patterns, and attention switching/tolerance of change in a clinical and a non-clinical group of women" \citep[p.~1]{vob240}.

\item
``Based on the present and reviewed studies, the conclusion must be that greater male variability contributes substantially to males' greater productivity, in terms of a larger proportion of male researchers who are very productive" \citep[p.~134]{vob241}.

\item
``In waxbills \ldots wider variation in male lateralization agrees with the `greater male variability hypothesis'" \citep[p.~50]{vob243}.

\end{enumerate}


\subsection{Primary Research Studies Not Supporting the Greater Male Variability Hypothesis}

\begin{enumerate}


\item
  ``the finding in this meta-analysis [is] that there is no sex difference in variance on the Advanced Progressive Matrices and that females show greater variance on the Standard Progressive Matrices"
\cite[p.~520]{vob70}.


\item
``the common assumption that males have greater variance in mathematics achievement is not universally true'' \cite[p.~S152]{vob80}.


\item
``data from several studies indicate that greater male variability 
with respect to mathematics is not ubiquitous...[and] is largely an 
artifact of changeable sociocultural factors, not immutable, 
innate biological differences between the sexes \ldots Our finding \ldots [is] inconsistent with the 
Greater Male Variability Hypothesis" \cite[pp.~8801, 8806]{vob41}.


\item
``Boys were not found to be more variable than girls" \cite[p.~326]{vob2}.
 
\item
``Therefore, we conclude that both variance and 
VR [variance ratio - ratio of male to female variance] 
in mathematics performance vary greatly among countries \ldots These 
findings are inconsistent with the greater male variability 
hypothesis" \cite[p.~14]{vob68}.


\item
``When differences in variance are considered, the data again shows little support for any sex differences, in either general or specific ability scores" \cite[p.~59]{vob156}.


\item
``Levene's test of homogeneity of variances (Table 2) showed that the variance was homogeneous between 
boys and girls" \cite[p.~5]{vob139}.

\item
``the `greater male variability hypothesis' is not supported by our data" \cite[p.~13]{vob151}.

\item
``in the surrounding areas of Bangkok \ldots females consistently outperformed their male counterparts in both English language and science related subjects, and also outnumbered their male peers in the top-100 achievers in both domains" \cite[p.~17]{vob176}.


\item
``we find no evidence for widespread sex differences in variability in non-human animal personality'' \citep{vob215}.

\item
``We find little or no support for the greater male variability hypothesis" \citep{vob228}.

\item
``In two studies, we found little evidence for gender differences in creative variability for verbal and  gural task performance in adults and adolescents" \citep{vob218}.


\item
``Regarding variability per-se, it is found to be very close between sexes and more often than not, greater for women. The prudent conclusion is thus one of gender similarity in the variability of human characteristics" \citep[p.~1]{vob244}.

\item
``The greater male variability hypothesis in performance was not supported for any of the performance measures" \citep[p.~1]{vob246}.



\item
``we found distinct greater female variability in human colour vision" \citep[p.~2]{vob255}.

\item
``Taken together, gender differences in the mean and variability of creative ability scores are minimal and inconsistent across different contexts, suggesting that the GMVH may not provide much explanatory power for the gender gap in creative achievement" \citep{vob257}.

\end{enumerate}


\subsection{Primary Research Studies with Mixed Results}

\begin{enumerate}


\item
``Variation was significantly greater among men than women in 5 of the 6 former data sets and was similar for men and women in the latter 2 data sets, broadly supporting the predictions. A further analysis extends the theory to intellectual abilities" \cite[p.~219]{vob65}. 


\item
``Our data show non-existent or trivial gender difference in mean scores. However, the tails of the distributions show differences between the males and females, with greater variability among males in the upper half of the distribution and greater variability among females in the lower half of the distribution" \cite[p.~292]{vob148}


\item
``all of the boy/girl VRs of the TCT-DP scores, with the exception of the Humor and Affectivity subscale, were above 1.0, which suggests that boys have greater variance than girls in creativity test performance \ldots In the lower tails, the boy/girl ratio for the regions where $z \leq -1$ (i.e., 1 standard deviation or more below the mean) was 0.92, indicating that more girls fell into this region, which is not in line with the prediction of the greater male variability hypothesis" \cite[pp.~884]{vob26}. 


\item
``The results of F tests of equality of variance revealed a statistically significant difference regarding the greater female variance in the children group and the greater male variance in the adolescent and the emerging adult groups" \cite[pp.~93]{vob140}.

\item
``this study showed that the Greater Male Variability Hypothesis had supportive evidence from responses to figural stimuli but not responses to verbal stimuli of the WKCT" \cite[p.~87]{vob127}.

\item
``I indeed found evidence of greater male variability, but only in Realistic and Enterprising interests and the higher-order PeopleÐThings dimension. Moreover, I established greater female variability in Artistic and Conventional interests " \cite[p.~575]{vob35}.


\item
``Study 1 demonstrated not only that greater male variability replicates among young children, but also that the general effect size of this ratio should be considered large (VR $>$ 2) \ldots Study 2 has confirmed these expectations: Not only were males characterized by a higher variability of original thinking and unconventionality, but also their mean originality scores were higher than those of females. By contrast, females were characterized by a higher mean level and variability of adaptiveness" \cite[p.~164]{vob99}.


\item
``Data for a sample of 18-year-old Applied Science and Social Science students in Libya are reported for intelligence (IQ), emotional intelligence (EQ) and educational achievement \ldots females had greater variability of intelligence than males \ldots In our study there were no clear sex differences in the variability of academic performance (Table 3), and only emotional intelligence showed greater male variability (Table 1)" \cite[pp.~448, 453]{vob179}.           


\item
``our results and conclusions provide strong evidence for the variability hypothesis in humans \ldots males appear to be more variable in the great majority of physical, cognitive, behavioral traits that have been investigated \ldots One notable exception to this general trend is that women appear to show higher levels of variability compared to men in some emotional traits, such as emotionality, anger, discomfort, fear, negative affectivity, and soothability" \cite[pp.~44, 53]{vob134}.

\item
``In most countries, male students indeed have greater variance in math performance (VR . 1.0), but the size of the gender difference in variances is small and varies by country, ranging from VR = 1.009 in Argentina to VR = 1.092 in Colombia. Moreover, in Brazil and Peru, female students actually have greater variances than male students (VR \ 1.0), which shows that the greater male variability hypothesis does not hold across the board and offers evidence of cross-country differences" \cite[p.~16]{vob152}.

\item
``it would appear there are gender differences in reading favoring girls across all levels of ability distribution, with these being small (and more similar) in the middle of the distribution but much larger (and impactful) at the tails. Furthermore, these gender differences are found in younger students, as well as older ones. We did not find strong support for the greater male variability hypothesis because the larger number of low scoring boys was offset by the higher number of high scoring girls \ldots At the lower-end of the ability distribution, boys were greatly overrepresented by a factor of 2 or more which grew slightly larger for older students. Just as with reading, there was a reversal of gender ratios for students attaining an advanced writing proficiency, with girls greatly overrepresented by a factor of 2 or more'' \cite[p.~14]{vob170}


\item
``Greater male variability was observed in Botswana, Japan, Lithuania, Singapore and United States. On the contrary, girls were more variable in their science success in Bahrain, Kuwait, Norway, Oman and Turkey" \cite[p.~194]{vob168}.

\item
``Our study  also highlighted that  boys  were likely to outperform in the higher score ranges, as well as lower score ranges.  Finally, we also found that the boys had significantly greater variabilities in science achievement and interest (variance ratio [VR] $>$ 1.1) while the girls had slight greater variability in creativity" \cite[p.~195]{vob164}.

\item
``While greater male variability was found in morphological traits, females were much more variable in immunological traits" \cite[p.~1]{vob177}. 


\item
``there is no variance difference for serial reversal learning, but greater female variability in spatial memory. These results differ from those found in food hoarders, which found greater male variability in reversal learning, but not spatial memory. This also is the first time greater female variance has been found for a cognitive ability" \citep[pp.~ii-iii]{vob250}.


\item
 ``we compared variance between the sexes for 50 morphological and physiological traits, analysing data from the NHANES database. Nearly half the traits did not exhibit sexual dimorphism in variation, while 18 exhibited greater female variation (GFV), indicating GFV does not dominate human characteristics. Only eight traits exhibited greater male variation (GMV), indicating GMV also does not dominate" \citep[p.~1]{vob245}.


\item
``Although males were more variable in several domains (e.g., gV, gC), females were more variable for processing speed (gS), and there were no VR differences for still other domains (e.g., short-term verbal memory)" \citep[p.~1]{vob251}.

\item
 ``Our results also partly supported The Greater Male Variability Hypothesis \ldots There were more than twice as many males as females in the lowest 10\% score percentile of AUT Fluency. However, contrary to some previous findings (He et al., 2015), we did not find an over-representation of males in the 0\%--25\% upper regions of the distribution. Moreover, the overall variability on this task was slightly higher for females" \citep[p.~58]{vob256}.

\end{enumerate}

\end{appendices}


\end{document}